\newcommand\yang[1]{\textcolor{red}{[Hongxia: #1]}}
\newcommand{\company}{Alibaba}
\newcommand{\model}{{\em GATNE}}
\newcommand{\bi}{base}
\newcommand{\gij}{edge}
\newcommand\green[1]{\textcolor{green}{#1}}
\newcommand{\hide}[1]{} 
\newcommand{\vpara}[1]{\vspace{0.07in}\noindent\textbf{#1 }}
\newtheorem{definition}{Definition}
\newtheorem{problem}{Problem}
\begin{document}

\settopmatter{printacmref=true}
\fancyhead{}

\title{Representation Learning for Attributed Multiplex Heterogeneous Network}

\author[Y. Cen, X. Zou, J. Zhang, H. Yang, J. Zhou and J. Tang]{
    Yukuo Cen$^{\dagger}$, Xu Zou$^{\dagger}$, Jianwei Zhang$^{\ddagger}$, Hongxia Yang$^{\ddagger*}$, Jingren Zhou$^{\ddagger}$, Jie Tang$^{\dagger*}$
}
\affiliation{
    $^\dagger$ Department of Computer Science and Technology, Tsinghua University
}
\affiliation{
    $^\ddagger$ DAMO Academy, Alibaba Group
}
\email{
  {cyk18, zoux18}@mails.tsinghua.edu.cn
}
\email{
  {zhangjianwei.zjw, yang.yhx, jingren.zhou}@alibaba-inc.com
}
\email{
   jietang@tsinghua.edu.cn
}

\renewcommand{\authors}{Yukuo Cen, Xu Zou, Jianwei Zhang, Hongxia Yang, Jingren Zhou and Jie Tang}
\renewcommand{\shortauthors}{Y. Cen et al.}
\begin{abstract}
\renewcommand{\thefootnote}{\fnsymbol{footnote}}
\footnotetext[1]{Hongxia Yang and Jie Tang are the corresponding authors.}
\renewcommand{\thefootnote}{\arabic{footnote}}

Network embedding (or graph embedding) has been widely used in many real-world applications. However, existing methods mainly focus on networks with single-typed nodes/edges and cannot scale well to handle large networks. Many real-world networks consist of billions of nodes and edges of multiple types, and each node is associated with different attributes. In this paper, we formalize the problem of embedding learning for the \textit{Attributed Multiplex Heterogeneous Network} and propose a unified framework to address this problem. The framework supports both \textit{transductive} and \textit{inductive} learning. We also give the theoretical analysis of the proposed framework, showing its connection with previous works and proving its better expressiveness.
We conduct systematical evaluations for the proposed framework on four different genres of challenging datasets: Amazon, YouTube, Twitter, and \company\footnote{Code is available at \url{https://github.com/cenyk1230/GATNE}.}.
Experimental results demonstrate that with the learned embeddings from the proposed framework, we can achieve statistically significant improvements (e.g., 5.99-28.23\% lift by F1 scores; $p\ll 0.01$, $t-$test) over previous state-of-the-art methods for link prediction.
The framework has also been successfully deployed on the recommendation system of a worldwide leading e-commerce company, \company\ Group. Results of the offline A/B tests on product recommendation further confirm the effectiveness and efficiency of the framework in practice.
\end{abstract}

%
%
\begin{CCSXML}
	<ccs2012>
	<concept>
	<concept_id>10002950.10003624.10003633.10010917</concept_id>
	<concept_desc>Mathematics of computing~Graph algorithms</concept_desc>
	<concept_significance>500</concept_significance>
	</concept>
	<concept>
	<concept_id>10010147.10010257.10010293.10010319</concept_id>
	<concept_desc>Computing methodologies~Learning latent representations</concept_desc>
	<concept_significance>500</concept_significance>
	</concept>
	</ccs2012>

\end{CCSXML}

\ccsdesc[500]{Mathematics of computing~Graph algorithms}
\ccsdesc[500]{Computing methodologies~Learning latent representations}

\keywords{Network embedding; Multiplex network; Heterogeneous network}

\maketitle

\section{Introduction}\label{sec:intro}

\begin{figure*}[t]
	\centering
	\includegraphics[width=0.95\textwidth]{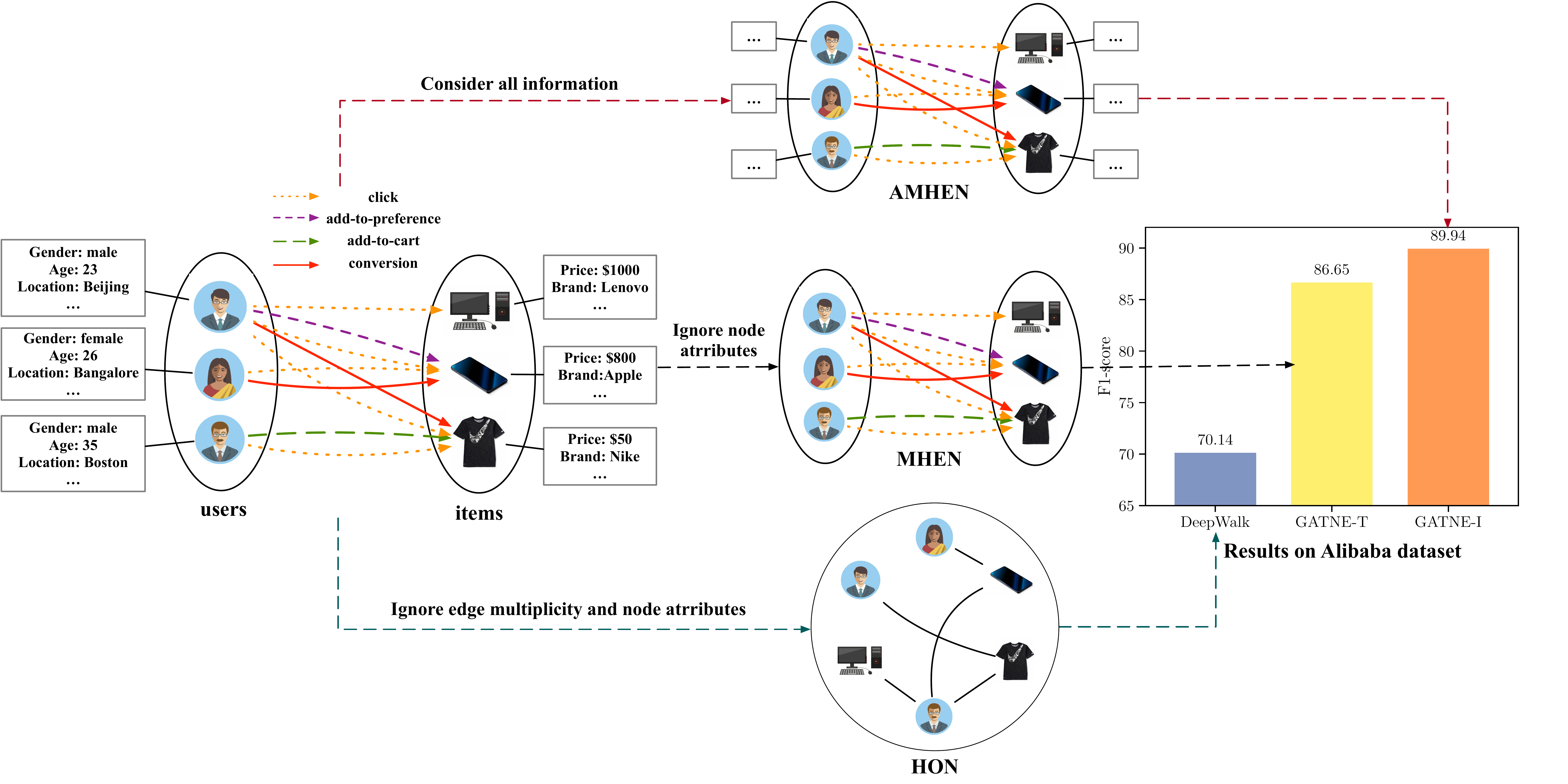}
    \caption{The left illustrates an example of an attributed multiplex heterogeneous network. Users in the left part of the figure are associated with attributes, including gender, age, and location. Similarly, items in the left part of the figure include attributes such as price and brand. The edge types between users and items are from four interactions, including \textit{click}, \textit{add-to-preference}, \textit{add-to-cart} and \textit{conversion}. The three subfigures in the middle represent three different ways of setting up the graphs, including HON, MHEN, and AMHEN from the bottom to the top. The right part shows the performance improvement of the proposed models over DeepWalk
    on the \company\ dataset. As can be seen, \model-I achieves a +28.23\%  performance lift compared to DeepWalk.
} 
	\label{fig:user_item}
\end{figure*}

Network embedding \cite{cui2018survey}, or network representation learning, is a promising method to project nodes in a network to a low-dimensional continuous space while preserving network structure and inherent properties. 
It has attracted tremendous attention recently due to significant progress in downstream network learning tasks such as node classification~\cite{bhagat2011node}, link prediction~\cite{taskar2004link}, and community detection~\cite{fortunato2010community}. 
DeepWalk~\cite{perozzi2014deepwalk}, LINE~\cite{tang2015line}, and node2vec~\cite{grover2016node2vec} are pioneering works that introduce deep learning techniques into network analysis to learn 
node embeddings. NetMF~\cite{qiu2018network}  gives a theoretical analysis of equivalence for the different network embedding algorithms, and later NetSMF~\cite{qiu2019netsmf} gives a scalable solution via sparsification.
Nevertheless, they were designed to handle only the homogeneous network  with single-typed nodes and edges.
More recently, PTE~\cite{tang2015pte}, metapath2vec~\cite{dong2017metapath2vec}, and HERec~\cite{shi2018heterogeneous} are proposed  for heterogeneous networks.
However, real-world  network-structured applications, such as e-commerce, are much more complicated, comprising not only multi-typed nodes and/or edges but also a rich set of  attributes.
Due to its significant importance and challenging requirements, 
there have been tremendous attempts in the literature to investigate embedding learning for  complex networks. Depending on the network topology (homogeneous or heterogeneous) and attributed property (with or without attributes), we categorize six different types of networks and summarize their relative comprehensive developments, respectively, in Table~\ref{tab:methods}. These six categories include \textit{HO}mogeneous \textit{N}etwork (or HON),  \textit{A}ttributed \textit{HO}mogeneous \textit{N}etwork (or AHON),  \textit{HE}terogeneous \textit{N}etwork (or HEN), \textit{A}ttributed \textit{HE}terogeneous \textit{N}etwork (or AHEN), \textit{M}ultiplex \textit{HE}terogeneous \textit{N}etwork (or MHEN), and \textit{A}ttributed \textit{M}ultiplex \textit{HE}terogeneous \textit{N}etwork (or AMHEN).  
As can be seen, among them the AMHEN has been least studied.

In this paper, we focus on embedding learning for AMHENs, where different types of nodes might be linked with multiple different types of edges, and each node is associated with a set of different attributes. This is common in many online applications. For example, in the four datasets that we are working with, there are  20.3\% (Twitter), 21.6\% (YouTube), 15.1\% (Amazon) and 16.3\% (\company) of the linked node pairs having more than one type of edges respectively. As an instance, in an e-commerce system, 
users may have several types of interactions with items, such as click, conversion, add-to-cart, add-to-preference. Figure~\ref{fig:user_item} illustrates such an example.
Obviously, ``users'' and ``items'' have intrinsically different properties and shall not be treated equally.
Moreover, different user-item interactions imply different levels of interests and should be treated differently. Otherwise, the system cannot precisely capture the user's behavioral patterns and preferences and would be insufficient for practical use. 

Not merely because of the heterogeneity and multiplicity, in practice, dealing with  AMHEN poses several unique challenges:

\begin{itemize}[leftmargin=*]
	
	\item  \textit{Multiplex Edges}. Each node pair may have multiple different types of relationships.  It is important to be able to borrow strengths from different relationships and learn unified embeddings.
	
	\item  \textit{Partial Observations}. The real networked data are usually partially observed. For example, a long-tailed customer may only present few interactions with some products. Most existing network embedding methods  focus on the transductive settings, and cannot handle the long-tailed or cold-start problems.
	
	
	\item \textit{Scalability}. Real networks usually have billions of nodes and tens or hundreds of billions of edges ~\cite{wang2018billion}. It is important to develop learning algorithms that can scale well to large networks.
	
\end{itemize}
\begin{table}[t]
	\footnotesize
	\centering
	\caption{\label{tab:methods} The network types handled by different methods.}
	\begin{tabular}{@{\;}c@{\;\;}|@{\;}c@{\;}|@{\;}c@{\;}|@{\;}c@{\;}|c@{\;}}
		\hline \hline
		\multirow{2}*{Network Type} & \multirow{2}*{Method} &  \multicolumn{2}{c|}{Heterogeneity}  & \multirow{2}*{Attribute} \\
		\cline{3-4}
		&&Node Type & Edge Type&\\
		\hline
		\multirow{5}*{\makecell{Homogeneous  \\ Network (HON)}} & DeepWalk~\cite{perozzi2014deepwalk} & \multirow{5}*{Single} & \multirow{5}*{Single} & \multirow{5}*{/}\\
		& LINE~\cite{tang2015line} & & & \\
		& node2vec~\cite{grover2016node2vec} &  & & \\
		& NetMF~\cite{qiu2018network} &  & & \\
		& NetSMF~\cite{qiu2019netsmf} &  & & \\
		\hline
		\multirow{6}*{\makecell{Attributed \\ 
				Homogeneous \\ Network (AHON)}} & TADW~\cite{yang2015network} &  \multirow{6}*{Single} & \multirow{6}*{Single} & \multirow{6}*{Attributed} \\
		& LANE~\cite{huang2017label}  & & & \\
		& AANE~\cite{huang2017accelerated}& & &\\
		& SNE~\cite{liao2018attributed} & & & \\
		& DANE~\cite{gao2018deep}  & & & \\
		& ANRL~\cite{zhang2018anrl}  & & & \\
		\hline
		\multirow{3}*{\makecell{Heterogeneous \\ Network (HEN)}} & PTE~\cite{tang2015pte} &  \multirow{3}*{Multi} & \multirow{3}*{Single} & \multirow{3}*{/} \\
		& metapath2vec~\cite{dong2017metapath2vec}  & & & \\
		& HERec~\cite{shi2018heterogeneous}  & & & \\
		\hline
		\multirow{2}*{\makecell{Attributed \\ HEN (AHEN)}}
		& \multirow{2}*{HNE~\cite{chang2015heterogeneous}} &\multirow{2}*{Multi} &\multirow{2}*{Single} & \multirow{2}*{Attributed} \\
		&&&&\\
		\hline
		\multirow{5}*{\makecell{Multiplex \\ Heterogeneous \\ Network (MHEN)}} & PMNE~\cite{liu2017principled} & \multirow{4}*{Single} & \multirow{4}*{Multi} & \multirow{5}*{/} \\
		& MVE~\cite{qu2017attention}  & & & \\
		& MNE~\cite{ijcai2018-428}  & & & \\
		& mvn2vec~\cite{shi2018mvn2vec} & & & \\
		\cline{2-4}
		& \model-T & Multi & Multi &\\
		\hline
		\multirow{2}*{\makecell{Attributed \\ MHEN (AMHEN)}}
		&\multirow{2}*{\model-I} &  \multirow{2}*{Multi} & \multirow{2}*{Multi} & \multirow{2}*{Attributed} \\
		&  &  &  &  \\
		\hline \hline
	\end{tabular}
\end{table}

To address the above challenges, we propose a novel approach to capture both rich attributed information and to utilize multiplex topological structures from different node types, namely  \textbf{G}eneral \textbf{A}ttributed Multiplex  He\textbf{T}erogeneous \textbf{N}etwork \textbf{E}mbedding, or abbreviated as \model. 
The key features of \model\ are the following:

\begin{itemize}[leftmargin=*]
\item We formally define the problem of attributed multiplex heterogeneous network embedding, which is a more general representation for real-world networks.

\item \model\ supports both transductive and inductive embeddings learning for
attributed multiplex heterogeneous networks. We also give the theoretical analysis to prove that our transductive model is a more general form than existing models (e.g., MNE~\cite{ijcai2018-428}).

\item Efficient and scalable learning algorithms for \model\ have been developed. Our learning algorithms are able to handle hundreds of million nodes and billions of edges efficiently.

\end{itemize}

We conduct extensive experiments to evaluate the proposed models on four different genres of datasets: Amazon, YouTube, Twitter, and \company.
Experimental results show that the proposed framework can achieve statistically significant improvements ($\sim$5.99-28.23\% lift by F1 scores on \company\ dataset; $p \ll$0.01, $t-$test) over state-of-the-art methods. 
We have deployed the proposed model on \company's distributed system and apply the method to \company's recommendation engine. Offline A/B tests further confirm the effectiveness and efficiency of our proposed models.

\hide{
\vpara{Organization.} The rest of this paper is organized as follows. In Section~\ref{sec:related}, we briefly review the related work. In Section~\ref{sec:problem}, we formally define the problem of attributed multiplex heterogeneous network embedding. In Section~\ref{sec:model}, we introduce \model\ in details. In Section~\ref{sec:exp}, we present the experimental results. Finally, we conclude in Section~\ref{sec:conclusion}.
}

\section{Related Work} \label{sec:related}






%
In this section, we review related state-of-the-arts for network embedding, heterogeneous network embedding, multiplex heterogeneous network embedding, and attributed network embedding.


\vpara{Network Embedding.}
Works in network embedding mainly consist of two categories, graph embedding (GE) and graph neural network (GNN). Representative works for GE include 
DeepWalk~\cite{perozzi2014deepwalk} which generates a corpus on graphs by random walk and then trains a skip-gram model on the corpus. LINE~\cite{tang2015line} learns node presentations on large-scale networks while preserving both first-order and second-order proximities. node2vec~\cite{grover2016node2vec} designs a biased random walk procedure to efficiently explore diverse neighborhoods. NetMF~\cite{qiu2018network} is a unified matrix factorization framework for theoretically understanding and improving DeepWalk and LINE.
For popular works in GNN,  GCN~\cite{kipf2017semi} incorporates neighbors' feature representations 
into the node feature representation using convolutional operations.
GraphSAGE~\cite{hamilton2017inductive} provides an inductive approach to combine structural information with node features. It learns functional representations instead of direct embeddings for each node, which helps it work inductively on unobserved nodes during training.

\vpara{Heterogeneous Network Embedding.}
Heterogeneous networks examine scenarios with nodes and/or edges of various types. Such networks are notoriously difficult to mine because of the bewildering combination of heterogeneous contents and structures. The creation of a multidimensional embedding of such data opens the door to the use of a wide variety of off-the-shelf mining techniques for multidimensional data. Despite the importance of this problem, limited efforts have been made on embedding a scalable network of dynamic and heterogeneous data. HNE~\cite{chang2015heterogeneous} jointly considers the contents and topological structures in networks and represents different objects
in heterogeneous networks to unified vector representations. PTE~\cite{tang2015pte} constructs large-scale heterogeneous text network from  labeled information and different levels of word co-occurrence information, which is then embedded into a low-dimensional space. 
metapath2vec~\cite{dong2017metapath2vec} formalizes meta-path based random walk to construct the heterogeneous neighborhood of a node and then leverages a heterogeneous skip-gram model to perform node embeddings. 
HERec~\cite{shi2018heterogeneous} uses a meta-path based random walk strategy to generate meaningful node sequences to learn network embeddings that are first transformed by a set of fusion functions and subsequently integrated into an extended matrix factorization (MF) model.

\vpara{Multiplex Heterogeneous Network Embedding.}
Existing approaches usually study networks with a single type of proximity between nodes, which only captures a single view of a network. However, in reality, there usually exist multiple types of proximities between nodes, yielding networks with multiple views, or multiplex network embedding. PMNE~\cite{liu2017principled} proposes three 
methods to project a multiplex network into a continuous vector space. MVE~\cite{qu2017attention} embeds networks with multiple views in a single collaborated embedding using attention mechanism. 
MNE~\cite{ijcai2018-428} uses one common embedding and several additional embeddings of each edge type for each node, which are jointly learned by a unified network embedding model.
Mvn2vec~\cite{shi2018mvn2vec} explores the feasibility to achieve better embedding results by simultaneously modeling preservation and collaboration to represent semantic meanings of edges in different views respectively.

\vpara{Attributed Network Embedding.}
Attributed network embedding aims to seek for low-dimensional vector representations for nodes in a network, such that original network topological structure and node attribute proximity can be preserved in such representations. 
TADW~\cite{yang2015network} incorporates text features of vertices into network representation learning under the framework of matrix
factorization. 
LANE~\cite{huang2017label} smoothly incorporates label information into the attributed network embedding while preserving their correlations.
AANE~\cite{huang2017accelerated} enables a joint learning process to be done in a distributed manner for accelerated attributed network embedding.
SNE~\cite{liao2018attributed} proposes a generic framework for embedding social networks by capturing both the structural proximity and attribute proximity. 
DANE~\cite{gao2018deep} can capture the high nonlinearity and preserve various proximities in both topological structure and node attributes. ANRL~\cite{zhang2018anrl} uses a neighbor enhancement autoencoder to model the node attribute information and an attribute-aware skip-gram model based on the attribute encoder to capture the network structure.

\section{Problem Definition}\label{sec:problem}
\begin{table}
  \centering
  \caption{\label{tab:notations} Notations.\hide{\green{notation table need refine}}}
  \small
  \begin{tabular}{c|p{2.55in}}
    \hline \hline
    \textbf{Notation} & \textbf{Description} \\
    \hline
    $G$ & the input network\\
    $\mathcal{V}, \mathcal{E}$ & the node/edge set of $G$\\
    $\mathcal{O}, \mathcal{R}$ & the node/edge type set of $G$\\
    $\mathcal{A}$ & the attribute set of $G$\\
    $n$ & the number of nodes\\
    $m$ & the number of edge types\\
    $r$ & an edge type \\
    $d$ & the dimension of base/overall embeddings \\
    $s$ & the dimension of edge embeddings \\
    $v$ & a node in the graph\\
    $\mathcal{N}$ & the neighborhood set of a node on an edge type \\
    $\mathbf{b},\mathbf{u},\mathbf{c},\mathbf{v}$ & the base/edge/context/overall embedding of a node \\
    $\mathbf{h}, \mathbf{g}$ & transformation functions in our inductive approach \\
    $\mathbf{x}$ & the attribute of a node\\
    \hline \hline
  \end{tabular}
\end{table}

Denote a network $G=(\mathcal{V},\mathcal{E})$, where $\mathcal{V}$ is a set of $n$ nodes and $\mathcal{E}$ is a set of edges between nodes. Each edge $e_{ij}=(v_i, v_j)$ is associated with a weight $w_{ij}\ge 0$, indicating the strength of the relationship between $v_i$ and $v_j$. In practice, the network could be either directed or undirected. If $G$ is directed, we have $e_{ij} \nequiv e_{ji}$ and $w_{ij} \nequiv w_{ji}$; if $G$ is undirected, we have $e_{ij}\equiv e_{ji}$ and $w_{ij}\equiv w_{ji}$.  Notations are summarized in Table~\ref{tab:notations}.

\begin{definition}[Heterogeneous Network]
A \textbf{heterogeneous network} \cite{sun2013pathselclus,chang2015heterogeneous} is a network $G=(\mathcal{V},\mathcal{E})$ associated with a node type mapping function $\phi: \mathcal{V} \to \mathcal{O}$ and an edge type mapping function $\psi: \mathcal{E} \to \mathcal{R}$, where $\mathcal{O}$ and $\mathcal{R}$ represent the set of all node types and the set of all edge types, respectively. Each node $v\in \mathcal{V}$ belongs to a particular node type. Similarly, each edge $e\in \mathcal{E}$ is categorized into a specific edge type. 
If $|\mathcal{O}|+|\mathcal{R}|>2$, the network is called \textbf{heterogeneous}; otherwise
\textbf{homogeneous}. 
\end{definition}

Notice that in a heterogeneous network, an edge $e$ can no longer be denoted as $e_{ij}$ since there may be multiple types of edges between node $v_i$ and $v_j$. Under such situations, an edge is denoted as $e_{ij}^{(r)}$, where $r$ corresponds to a certain edge type.

\begin{definition}[Attributed Network]
\label{def:ahn}
An \textbf{attributed network} \cite{huang2017label,chang2015heterogeneous}  is a network $G$ endowed with an attribute representation $\mathcal{A}$, i.e., $G=(\mathcal{V},\mathcal{E},\mathcal{A})$. Each node $v_i\in\mathcal{V}$ is associated with some types of feature vectors. $\mathcal{A}=\{\mathbf{x}_{i}|v_i\in \mathcal{V}\}$ is the set of node features for all nodes, where $\mathbf{x}_i$ is the associated node feature of node $v_i$. 
\end{definition}

\begin{definition}[Attributed Multiplex Heterogeneous Network]
An \textbf{attributed multiplex heterogeneous network} is a network $G=(\mathcal{V},\mathcal{E} ,\mathcal{A})$, $\mathcal{E}=\bigcup_{r\in \mathcal{R}}\mathcal{E}_r$, where $\mathcal{E}_r$ consists of all edges with edge type $r \in \mathcal{R}$, and $|\mathcal{R}|>1$. We separate the network for every edge type or view $r\in\mathcal{R}$ as $G_r=(\mathcal{V},\mathcal{E}_r ,\mathcal{A})$. 
\end{definition}

An example of AMHEN is illustrated in Figure~\ref{fig:user_item}, which contains $2$ node types and $4$ edge types. The two node types are $user$ and $item$ with different attributes. 
Given the above definitions, we can formally define our problem for representation learning on networks.

\begin{problem}[AMHEN Embedding]
Given an AMHEN $G=(\mathcal{V},\mathcal{E},\mathcal{A})$, the problem of \textbf{AMHEN embedding} is to give a unified low-dimensional space representation of each node $v$ on every edge type $r$. The goal is to find a function $f_r: \mathcal{V}\to \mathbb{R}^d$ for every edge type $r$, where $d\ll |\mathcal{V}|$.
\end{problem}

\section{Methodology}\label{sec:model}

In this section, we first explain the proposed \model\ framework in the transductive context~\cite{kipf2017semi}. The resultant model is named as \model-T.  We also give a theoretical discussion about the connection of  \model-T with the newly trending models, e.g., MNE. To deal with the partial observation problem, we further extend the model to the inductive context~\cite{yang2016revisiting} and present a new inductive model named as \model-I. For both models, we present efficient optimization algorithms.



\begin{figure}[t]
    \centering
    \includegraphics[width=0.48\textwidth]{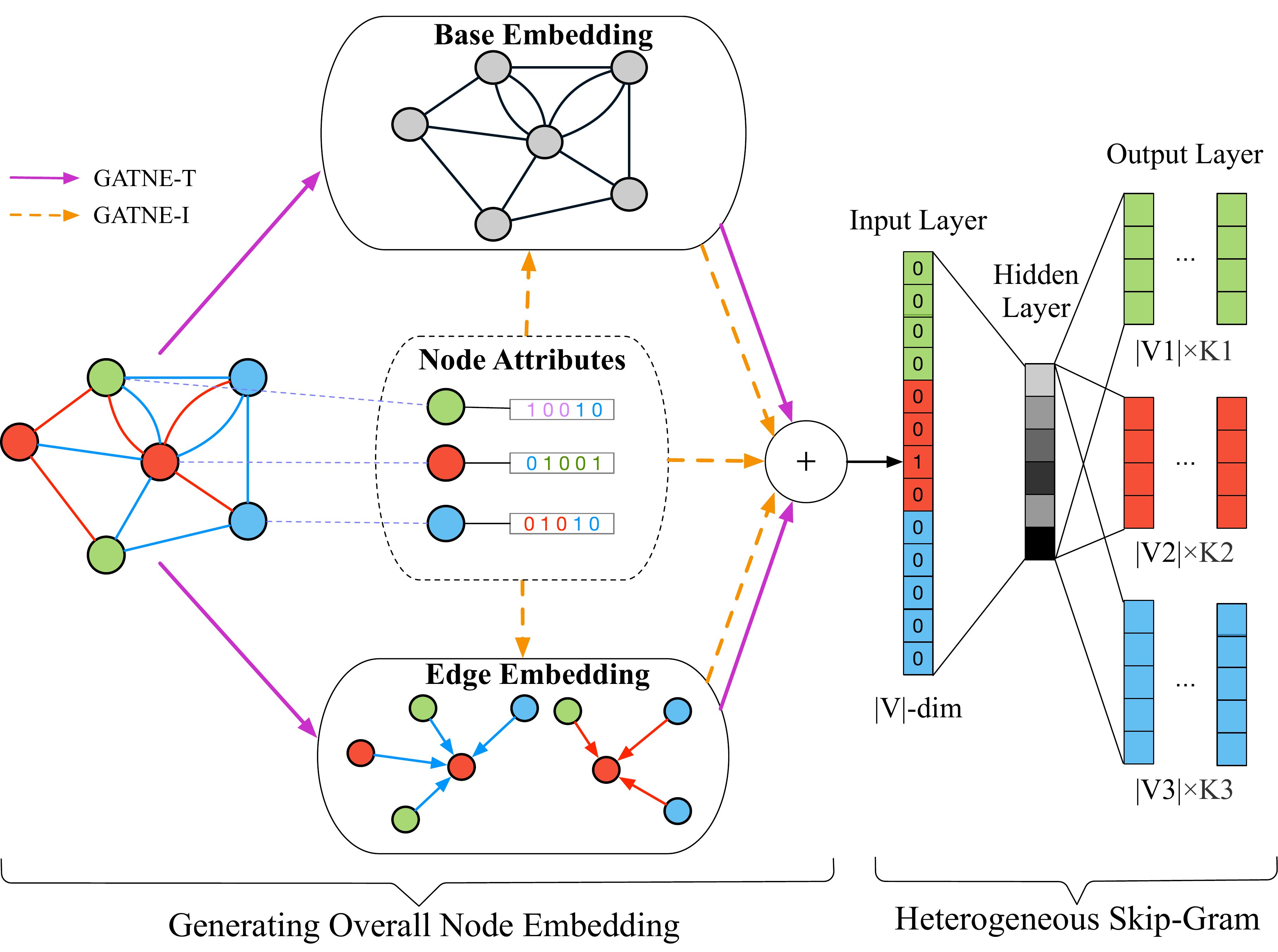}
    \caption{Illustration of \model-T and \model-I models. \model-T only uses network structure information while \model-I considers both structure information and node attributes. The output layer of heterogeneous skip-gram specifies one set of multinomial distributions for each node type in the neighborhood of the input node $v$. In this example, $\mathcal{V}=\mathcal{V}1\cup \mathcal{V}2\cup \mathcal{V}3$ and $K1, K2, K3$ specify the size of $v$'s neighborhood on each node type respectively.}
    \label{fig:model}
\end{figure}

\subsection{Transductive Model: \model-T}
\label{sec:trans}

We begin with embedding learning for 
attributed multiplex heterogeneous networks in the transductive context, and present our model \model-T.
More specifically, in \model-T, 
we split the overall embedding of a certain node $v_i$ on each edge type $r$ into two parts: base embedding, edge embedding as shown in Figure~\ref{fig:model}. The base embedding of node $v_i$ is shared between different edge types. \hide{The edge embedding of node $v_i$ is computed by $m$ aggregated embeddings from neighbors for each edge type, where $m$ is the number of edge types of the network. Each edge embedding $\mathbf{u}_{i,r}\in \mathbb{R}^s$ on edge type $r$ is computed by the meta embeddings of $v_i$'s neighbors corresponding to edge type $r$ as: }
The $k$-th level edge embedding $\mathbf{u}^{(k)}_{i,r} \in \mathbb{R}^s, (1\leq k\leq K)$ of node $v_i$ on edge type $r$ is aggregated from neighbors' edge embeddings as:
\begin{equation}
\label{aggregation}
  \mathbf{u}^{(k)}_{i,r} = aggregator(\{\mathbf{u}^{(k-1)}_{j,r}, \forall v_j\in \mathcal{N}_{i,r}\}),  
\end{equation}

\noindent where \hide{$s$ is the dimension of edge embeddings, $K$ is the total aggregation levels and} $\mathcal{N}_{i,r}$ is the neighbors of node $v_i$ on edge type $r$. The initial edge embedding $\mathbf{u}^{(0)}_{i,r}$ for each node $v_i$ and each edge type $r$ is randomly initialized in our transductive model. Following GraphSAGE~\cite{hamilton2017inductive}, the $aggregator$ function can be a mean aggregator as:
\begin{equation}
\label{mean}
  \mathbf{u}^{(k)}_{i,r} = \sigma(\mathbf{\hat{W}}^{(k)} \cdot \operatorname{mean}(\{\mathbf{u}^{(k-1)}_{j,r}, \forall v_j\in\mathcal{N}_{i,r}\})),
\end{equation}

\noindent or other pooling aggregators such as max-pooling aggregator:
\begin{equation}
\label{pool}
  \mathbf{u}^{(k)}_{i,r} = \max(\{\sigma(\mathbf{\hat{W}}^{(k)}_{pool}\mathbf{u}^{(k-1)}_{j,r}+\mathbf{\hat{b}}^{(k)}_{pool}), \forall v_j\in\mathcal{N}_{i,r}\}),
\end{equation}

\noindent where $\sigma$ is an activation function.
We denote the $K$-th level edge embedding $\mathbf{u}^{(K)}_{i,r}$ as edge embedding $\mathbf{u}_{i,r}$, and concatenate all the edge embeddings for node $v_i$ as $\mathbf{U}_i$ with size $s$-by-$m$, where $s$ is the dimension of edge-embeddings:
\begin{equation}
\mathbf{U}_i = ( \mathbf{u}_{i,1}, \mathbf{u}_{i,2}, \ldots , \mathbf{u}_{i,m} ).
\end{equation}

We use self-attention mechanism~\cite{lin2017structured} to compute the coefficients $\mathbf{a}_{i,r} \in \mathbb{R}^m$ of linear combination of vectors in $\mathbf{U}_i$ on edge type $r$ as:
\begin{equation}
\label{eqn:air}
\mathbf{a}_{i,r} = \operatorname{softmax} ( \mathbf{w}_r^{T} \tanh ~( \mathbf{W}_r \mathbf{U}_i ) )^T,
\end{equation}

\noindent where $\mathbf{w}_r$ and $\mathbf{W}_r$ are trainable parameters for edge type $r$ with size $d_a$ and $d_a\times s$ respectively and the superscript $T$ denotes the transposition of the vector or the matrix. Thus, the overall embedding of node $v_i$ for edge type $r$ is:
\begin{equation}
\label{GATNE}
\mathbf{v}_{i,r} = \mathbf{b}_i + \alpha_r  \mathbf{M}_r^{T} \mathbf{U}_i \mathbf{a}_{i,r},
\end{equation}

\noindent where $\mathbf{b}_i$ is the \bi\ embedding for node $v_i$, $\alpha_r$ is a hyper-parameter denoting the importance of edge embeddings towards the overall embedding and $\mathbf{M}_r \in \mathbb{R}^{s\times d}$ is a trainable transformation matrix.

\vpara{Connection with Previous Work.} 
We choose MNE~\cite{ijcai2018-428}, a recent representative work for MHEN, as the base model for multiplex heterogeneous networks to discuss the connection between our proposed model and previous work. 
In \model-T, we use the attention mechanism to capture the influential factors between different edge types. 
We  theoretically prove that 
our transductive model  
is a more general form of 
MNE and improves the expressiveness.
For MNE, the overall embedding $\mathbf{\tilde{v}}_{i,r}$ of node $v_i$ on edge type $r\in \mathcal{R}$ is 
\begin{equation}
\label{MNE}
\mathbf{\tilde{v}}_{i,r}=\mathbf{b}_i+\alpha_r\mathbf{X}_r^{T}\mathbf{o}_{i,r},
\end{equation}
\noindent where $\mathbf{X}_r$ is a edge-specific transformation matrix.
And for \model-T, the overall embedding for node $v_i$ on edge type $r$ is:
\begin{equation}
\begin{aligned}
\mathbf{v}_{i,r} &= \mathbf{b}_i+\alpha_r  \mathbf{M}_r^{T} \mathbf{U}_i \mathbf{a}_{i,r} = \mathbf{b}_i + \alpha_r \mathbf{M}_r^{T} \sum_{p=1}^{m}\lambda_{p} \mathbf{u}_{i,p},\\
\end{aligned}
\end{equation}

\noindent where $\lambda_p$ denotes the $p$-th element of $\mathbf{a}_{i,r}$ and is computed as:
\begin{equation}
\label{lamb}
	\lambda_{p} = \frac{\exp (\mathbf{w}_r^{T} \tanh ~( \mathbf{W}_r \mathbf{u}_{i,p}))}{\sum_t\exp (\mathbf{w}_r^{T} \tanh ~( \mathbf{W}_r \mathbf{u}_{i,t}))}. 
\end{equation}

\begin{theorem}
\label{thm}
For any $r\in\mathcal{R}$, there exist parameters $\mathbf{w}_r$ and $\mathbf{W}_r$, such that for any $\mathbf{o}_{i,1}, \ldots, \mathbf{o}_{i,m}$, and corresponding matrix $\mathbf{X}_r$, with dimension $s$ for each $\mathbf{o}_{i,j}$ and size $s\times d$ for $\mathbf{X}_r$, there exist $\mathbf{u}_{i,1}, \ldots, \mathbf{u}_{i,m}$, and corresponding matrix $\mathbf{M}_r$, with dimension $s+m$ for each $\mathbf{u}_{i,j}$ and size $(s+m)\times d$ for $\mathbf{M}_r$, that satisfy $\mathbf{v}_{i,r}\approx \mathbf{\tilde{v}}_{i,r}$.
\end{theorem}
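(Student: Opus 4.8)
The plan is to reduce the claim to matching the edge-dependent term only, since both models share the same base embedding $\mathbf{b}_i$ and the same scalar $\alpha_r$: it suffices to exhibit $\mathbf{u}_{i,1},\dots,\mathbf{u}_{i,m}$ and a matrix $\mathbf{M}_r$ for which $\mathbf{M}_r^{T}\sum_{p}\lambda_p\mathbf{u}_{i,p}\approx\mathbf{X}_r^{T}\mathbf{o}_{i,r}$. The essential difficulty is that the attention weights $\lambda_p$ produced by the softmax in Eq.~\eqref{lamb} are nonnegative and sum to one, so $\sum_p \lambda_p \mathbf{u}_{i,p}$ is always a \emph{convex combination} of the edge embeddings; we cannot freely rescale it, and, crucially, we must force the attention to concentrate almost entirely on index $r$ while choosing $\mathbf{w}_r,\mathbf{W}_r$ \emph{before} the vectors $\mathbf{o}_{i,\cdot}$ are revealed. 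This is precisely why the theorem enlarges the edge-embedding dimension from $s$ to $s+m$: the extra $m$ coordinates carry a position signal that lets the attention identify $r$ independently of the (arbitrary) content.

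Concretely, I would append a scaled one-hot indicator. For a large constant $C>0$, set $\mathbf{u}_{i,p}=\bigl(\mathbf{o}_{i,p}^{T},\,C\,\mathbf{e}_p^{T}\bigr)^{T}\in\mathbb{R}^{s+m}$, where $\mathbf{e}_p$ is the $p$-th standard basis vector of $\mathbb{R}^m$. Take $d_a=m$ and let $\mathbf{W}_r=\bigl(\mathbf{0}_{m\times s}\,\big|\,C'\mathbf{I}_m\bigr)$ zero out the first $s$ ``content'' coordinates and read only the indicator block (with $C'>0$ large), and let $\mathbf{w}_r=D\,\mathbf{e}_r$ for a large $D>0$. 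Then $\mathbf{W}_r\mathbf{u}_{i,p}=C'C\,\mathbf{e}_p$ does not depend on $\mathbf{o}_{i,\cdot}$ at all, so $\tanh(\mathbf{W}_r\mathbf{u}_{i,p})=\tanh(C'C)\,\mathbf{e}_p$ and the pre-softmax score of index $p$ is $\mathbf{w}_r^{T}\tanh(\mathbf{W}_r\mathbf{u}_{i,p})=D\tanh(C'C)\,\delta_{pr}$.

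Feeding these scores through Eq.~\eqref{lamb} gives $\lambda_r=\frac{e^{D\tanh(C'C)}}{e^{D\tanh(C'C)}+(m-1)}$ and $\lambda_p=\frac{1}{e^{D\tanh(C'C)}+(m-1)}$ for $p\neq r$, so that $\lambda_r\to1$ and $\lambda_p\to0$ as $C,C',D\to\infty$; hence $\sum_p\lambda_p\mathbf{u}_{i,p}\to\mathbf{u}_{i,r}$. Finally I would choose $\mathbf{M}_r=\bigl(\mathbf{X}_r^{T}\,\big|\,\mathbf{0}_{d\times m}\bigr)^{T}\in\mathbb{R}^{(s+m)\times d}$, which annihilates the indicator block, so $\mathbf{M}_r^{T}\mathbf{u}_{i,p}=\mathbf{X}_r^{T}\mathbf{o}_{i,p}$ for every $p$ and therefore $\mathbf{M}_r^{T}\sum_p\lambda_p\mathbf{u}_{i,p}=\mathbf{X}_r^{T}\sum_p\lambda_p\mathbf{o}_{i,p}\to\mathbf{X}_r^{T}\mathbf{o}_{i,r}$. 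Adding back $\mathbf{b}_i+\alpha_r(\cdot)$ yields $\mathbf{v}_{i,r}\to\mathbf{\tilde{v}}_{i,r}$, which is the asserted $\approx$.

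The main obstacle, and the step I would treat most carefully, is quantifying this last limit: because $\mathbf{w}_r,\mathbf{W}_r$ are fixed before the $\mathbf{o}_{i,\cdot}$ appear, the residual is $\alpha_r\mathbf{X}_r^{T}\bigl(\sum_p\lambda_p\mathbf{o}_{i,p}-\mathbf{o}_{i,r}\bigr)$, whose norm is controlled by $(1-\lambda_r)\max_p\|\mathbf{o}_{i,p}\|$. I would therefore make the saturation and sharpness explicit---taking $C'$ and $D$ large in the fixed parameters and $C$ large in the (data-dependent) choice of $\mathbf{u}_{i,\cdot}$---so that the attention mass on the off-diagonal indices, and with it the residual, becomes negligible; this is exactly what the symbol $\approx$ encodes. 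It is also worth remarking that the converse inclusion fails: MNE's single term $\mathbf{X}_r^{T}\mathbf{o}_{i,r}$ cannot realize a genuinely content-dependent mixture of $m$ distinct edge embeddings, which is the sense in which \model-T is strictly more expressive.
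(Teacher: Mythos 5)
Your construction is correct and is essentially the paper's own proof: both append a one-hot positional indicator in the extra $m$ coordinates of $\mathbf{u}_{i,p}$, choose $\mathbf{W}_r,\mathbf{w}_r$ to read only that indicator block so the softmax score of index $p$ is (up to saturation of $\tanh$) a large constant times $\delta_{rp}$ and hence $\lambda_p\approx\delta_{rp}$, and finally take $\mathbf{M}_r$ to be $\mathbf{X}_r$ padded with an $m\times d$ zero block. The differences are cosmetic---the paper puts a single large entry $M$ in row $1$, column $s+r$ of $\mathbf{W}_r$ rather than your scaled identity block with $d_a=m$, and your explicit residual bound $(1-\lambda_r)\max_p\|\mathbf{o}_{i,p}\|$ merely makes precise the ``$\approx$'' that the paper leaves informal.
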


\begin{proof}
For any $t$, let $\mathbf{u}_{i,t}=(\mathbf{o}^T_{i,t},\mathbf{u}'^T_{i,t})^T$ concatenated by two vectors, where the first $s$ dimension is $\mathbf{o}_{i,t}$, and $\mathbf{u}_{i,t}'$ is an $m$-dimensional vector. 
Let $u_{i,t,k}'$ denote the $k$-th dimension of $\mathbf{u}_{i,t}'$, then take $u_{i,t,k}'=\delta_{tk}$, where $\delta$ is the Kronecker delta function as
\begin{equation}
\label{kronecker}
\delta_{ij}=1, i=j; \delta_{ij}=0, i\neq j.
\end{equation}
Let $\mathbf{W}_r$ be all zero, except for the element on row $1$ and column $s+r$ with a large enough number $M$; therefore $\tanh(\mathbf{W}_r \mathbf{u}_{i,p})$ becomes a vector with its $1^{st}$ dimension approximately being $\delta_{rp}$, and other dimensions being $0$. Take $\mathbf{w}_r$ as a vector with its $1^{st}$ dimension being $M$,  then $\exp (\mathbf{w}_r^T \tanh ~( \mathbf{W}_r \mathbf{u}_{i,p}))\approx \exp(M\delta_{rp})$, so $\lambda_{p}\approx \delta_{rp}$.  Finally we take $\mathbf{M}_r$ being $\mathbf{X}_r$ at its first $s\times d$ dimension, and $0$ on the following $m\times d$ dimension, and we can get $\mathbf{v}_{i,r}\approx \mathbf{\tilde{v}}_{i,r}$.
\end{proof}
Thus the parameter space of MNE is almost included by our model's parameter space and we can conclude that \model-T is a generalization of MNE, if edge embeddings can be trained directly. However, in our model, the edge embedding $\mathbf{u}$ is generated from single or multiple layers of aggregation. We give more discussions about the aggregation case. \hide{Interested readers are referred to Appendix \ref{extension} for more discussions about the aggregated case.} 

\vpara{Effects of Aggregation.}
In the \model-T model, the edge embedding $\mathbf{u}^{(k)}$ is computed by aggregating the edge embedding $\mathbf{u}^{(k-1)}$ of its neighbors as:
\begin{equation}
\label{eqn:uirk}
\mathbf{u}^{(k)}_{i,r}=\sigma(\mathbf{\hat{W}}^{(k)}\cdot\operatorname{mean}(\{\mathbf{u}^{(k-1)}_{j,r}, v_j\in\mathcal{N}_{i,r}\})).
\end{equation}
The mean aggregator is basically a matrix multiplication,
\begin{equation}
\operatorname{mean}_{k}(v_i)=\operatorname{mean}(\{\mathbf{u}^{(k-1)}_{j,r}, v_j\in\mathcal{N}_{i,r}\})=(\mathbf{U}^{(k-1)}_{r}\mathbf{N}_r)_i,
\end{equation}
\noindent where $\mathbf{N}_r$ is the neighborhood matrix on edge type $r$, $\mathbf{U}^{(k)}_r=(\mathbf{u}^{(k)}_{1,r},...,\mathbf{u}^{(k)}_{n,r})$ is the $k$-th level edge embedding of all nodes in the graph on edge type $r$, and $(\mathbf{U}^{(k-1)}_{r}\mathbf{N}_r)_i$ denotes the $i^{th}$ column of $\mathbf{U}^{(k-1)}_{r}\mathbf{N}_r$. $\mathbf{N}_r$ can be a normalized adjacency matrix. The mean operator of Equation (\ref{eqn:uirk}) can be weighted and the neighborhood matrix $\mathcal{N}_{i,r}$ can be sampled.
Take $\mathbf{\hat{W}}^{(k)}=\mathbf{I}$, where $\mathbf{I}$ is an identity matrix, then $\mathbf{u}^{(k)}_{i,r}=\sigma(\operatorname{mean}_k(v_i))$.
If $\mathbf{N}_r$ is of full rank, then for any $\mathbf{O}_r=(\mathbf{o}_{1,r}, \ldots, \mathbf{o}_{n,r})$, there exists $\mathbf{U}^{(k-1)}_r$ such that $\mathbf{U}^{(k-1)}_{r}\mathbf{N}_r=\mathbf{O}_r$.

If the activation function $\sigma$ is an automorphism, i.e., $\sigma:\mathbb{R}\rightarrow \mathbb{R}$ and $\mathbf{N}_r$ is of full rank, we can use the construction method described in Theorem \ref{thm} to construct $\mathbf{u}_{i,r}$ and the above method to construct each level of edge embeddings $\mathbf{u}^{(K-1)}_{i,r}, ..., \mathbf{u}^{(0)}_{i,r}$ subsequently. Therefore, our model is still a more general form that can generalize the MNE model, when all the neighborhood matrices $\mathbf{N}_r$ and the activation function $\sigma$ are invertible in all levels of aggregation. 

\subsection{Inductive Model: \model-I}
The limitation of \model-T is that it cannot handle unobserved data. 
However, in many real-world applications, the networked data is often partially observed~\cite{yang2016revisiting}.
We then extend our model to the inductive context and present a new model named \model-I. The model is also illustrated in Figure~\ref{fig:model}. 
Specifically, we define the \bi\ embedding $\mathbf{b}_i$ as a parameterized function of $v_i$'s attribute $\mathbf{x}_i$ as $\mathbf{b}_i = \mathbf{h}_z(\mathbf{x}_i)$, where $\mathbf{h}_z$ is a transformation function\hide{ that transforms $\mathbf{x}_i$ to \bi\ embedding $\mathbf{b}_i$} and $z=\phi(i)$ is node $v_i$'s corresponding node type. Notice that nodes with different types may have different dimensions of their attributes $\mathbf{x}_i$. The transformation function $\mathbf{h}_z$ can have different forms such as a multi-layer perceptron~\cite{pal1992multilayer}.
Similarly, the initial \gij\ embeddings $\mathbf{u}^{(0)}_{i,r}$ for node $i$ on edge type $r$ should be also parameterized as the function of attributes $\mathbf{x}_i$ as $\mathbf{u}^{(0)}_{i,r}=\mathbf{g}_{z,r}(\mathbf{x}_i)$, where $\mathbf{g}_{z,r}$ is also a transformation function that transforms the feature to an edge embedding of node $v_i$ on the edge type $r$ and $z$ is $v_i$'s corresponding node type. To be more specific, for the inductive model, we also add an extra attribute term to the overall embedding of node $v_i$ on type $r$: 
\begin{equation}
\label{eqn:vni}
\mathbf{v}_{i,r} = \mathbf{h}_z(\mathbf{x}_i) + \alpha_r \mathbf{M}_r^{T} \mathbf{U}_i \mathbf{a}_{i,r} + \beta_r \mathbf{D}_{z}^T \mathbf{x}_i,
\end{equation}

\noindent where $\beta_r$ is a coefficient and $\mathbf{D}_{z}$ is a feature transformation matrix on $v_i$'s corresponding node type $z$. \hide{Each node $v_i$ corresponds to a certain node type $z$ and corresponds to a feature transformation matrix $\mathbf{D}_{z}$.} The difference between our transductive and inductive model mainly lies on how the base embedding $\mathbf{b}_i$ and initial edge embeddings $\mathbf{u}^{(0)}_{i,r}$ are generated. In our transductive model, the base embedding $\mathbf{b}_i$ and initial edge embedding $\mathbf{u}^{(0)}_{i,r}$ are directly trained for each node based on the network structure, and the transductive model cannot handle nodes that are not seen during training. As for our inductive model, instead of training $\mathbf{b}_i$ and $\mathbf{u}^{(0)}_{i,r}$ directly for each node, we train transformation functions $\mathbf{h}_z$ and $\mathbf{g}_{z,r}$ that transforms the raw feature $\mathbf{x}_i$ to $\mathbf{b}_i$ and $\mathbf{u}^{(0)}_{i,r}$, which works for nodes that did not appear during training as long as they have corresponding raw features. \hide{For further discussion of the inductive model see Appendix \ref{extension}.}

\begin{algorithm}[t]
	\caption{\model \label{algo:our}}
	\KwIn{network $G=(\mathcal{V},\mathcal{E},\mathcal{A})$, embedding dimension $d$, edge embedding dimension $s$, learning rate $\eta$, negative samples $L$, coefficient $\alpha$, $\beta$. }
	\KwOut{overall embeddings $\mathbf{v}_{i,r}$ for all nodes on every edge type $r$}
	Initialize all the model parameters $\theta$.\hide{ $\mathbf{b}$, $\mathbf{u}^{(0)}$ for all nodes, and other model parameters~($\mathbf{\hat{W}}$, $\mathbf{\hat{b}}$, $\mathbf{D}$, $\mathbf{M}$, $\mathbf{w}$, $\mathbf{W}$)} \\
	Generate random walks on each edge type $r$ as $\mathcal{P}_r$. \\
	Generate training samples $\{(v_i,v_j,r)\}$ from random walks $\mathcal{P}_r$ on each edge type $r$.\\
	\While{not converged} {
		\ForEach{$(v_i,v_j,r)\in$ training samples} {
			Calculate $\mathbf{v}_{i,r}$ using Equation (\ref{GATNE}) or (\ref{eqn:vni}) \\
			Sample $L$ negative samples and calculate objective function $E$ using Equation (\ref{eqn:obj_fun}) \\
			Update model parameters $\theta$ by $\frac{\partial E}{\partial \theta}$.
		}
	}
\end{algorithm}

\subsection{Model Optimization}
We discuss how to learn the proposed transductive and inductive models.
Following~\cite{perozzi2014deepwalk,tang2015line,grover2016node2vec}, we use random walk to generate node sequences and then perform skip-gram~\cite{mikolov2013efficient,mikolov2013distributed} over the node sequences to learn embeddings. Since each view of the input network is heterogeneous, we use meta-path-based random walks~\cite{dong2017metapath2vec}.
To be specific, given a view $r$ of the network, i.e., $G_r=(\mathcal{V}, \mathcal{E}_r, \mathcal{A})$ and a meta-path scheme $\mathcal{T}: \mathcal{V}_1 \to \mathcal{V}_2 \to \cdots \mathcal{V}_t \cdots \to \mathcal{V}_l$, where $l$ is the length of the meta-path scheme, the transition probability at step $t$ is defined as follows:
\begin{equation}
  p(v_j|v_i, \mathcal{T})=\left\{
    \begin{array}{cl}
      \frac{1}{|\mathcal{N}_{i,r}\cap\mathcal{V}_{t+1}|} & (v_i,v_j)\in \mathcal{E}_r, v_j\in\mathcal{V}_{t+1}, \\
      0 & (v_i,v_j)\in \mathcal{E}_r, v_j\notin\mathcal{V}_{t+1}, \\
      0 & (v_i,v_j)\notin \mathcal{E}_r,
    \end{array}
  \right.
\end{equation}

\noindent where $v_i \in \mathcal{V}_t$ and $\mathcal{N}_{i,r}$ denotes the neighborhood of node $v_i$ on edge type $r$. The flow of the walker is conditioned on the pre-defined meta path $\mathcal{T}$. The meta-path-based random walk strategy ensures that the semantic relationships between different types of nodes can be properly incorporated into skip-gram model~\cite{dong2017metapath2vec}. 
Supposing the random walk with length $l$ on edge type $r$ follows a path $P=(v_{p_1},\ldots,v_{p_l})$ such that $(v_{p_{t-1}}, v_{p_t})\in \mathcal{E}_r (t=2\ldots l)$, denote $v_{p_t}$'s context as $C=\{v_{p_k}|v_{p_k}\in P, |k-t|\leq c, t\neq k\}$, where $c$ is the radius of the window size.

Thus, given a node $v_i$ with its context $C$ of a path, our objective is to minimize the following negative log-likelihood:
\begin{equation}
-\log P_{\theta} \left(\{v_{j}| v_{j}\in C\}|v_{i} \right)=\sum_{v_j\in C} - \log P_{\theta}(v_{j}|v_{i}),
\end{equation}

\noindent where $\theta$ denotes all the parameters. Following metapath2vec~\cite{dong2017metapath2vec} we use the heterogeneous softmax function which is normalized with respect to the node type of node $v_j$. Specifically, the probability of $v_j$ given $v_i$ is defined as:
\begin{equation}
P_{\theta}(v_j|v_i) = \frac{\exp (\mathbf{c}_{j}^{T} \cdot \mathbf{v}_{i,r})}{\sum_{k\in \mathcal{V}_t} \exp (\mathbf{c}_{k}^{T} \cdot \mathbf{v}_{i,r})},
\end{equation}

\noindent where $v_j\in \mathcal{V}_t$, $\mathbf{c}_k$ is the context embedding of node $v_k$ and $\mathbf{v}_{i}$ is the overall embedding of node $v_i$ for edge type $r$.



Finally, we use heterogeneous negative sampling to approximate the objective function $-\log P_{\theta}(v_j|v_i)$ for each node pair $(v_i,v_j)$ as:
\begin{equation}
\label{eqn:obj_fun}
E = - \log \sigma ( \mathbf{c}_{j}^{T} \cdot \mathbf{v}_{i,r} ) - \sum_{l=1}^{L} \mathbb{E}_{v_k\thicksim P_t(v)} [ \log \sigma ( - \mathbf{c}_k^{T} \cdot \mathbf{v}_{i,r} ) ],
\end{equation}

\noindent where $\sigma(x)=1/(1+\exp(-x))$ is the sigmoid function, $L$ is the number of negative samples correspond to a positive training sample, and $v_k$ is randomly drawn from a noise distribution $P_t(v)$ defined on node $v_j$'s corresponding node set $\mathcal{V}_t$. 


We summarize our algorithm in Algorithm~\ref{algo:our}. The time complexity of our random walk based algorithm is $O(nmdL)$ where $n$ is the number of nodes, $m$ is the number of edge types, $d$ is overall embedding size, $L$ is the number of negative samples per training sample ($L\geq 1$). The memory complexity of our algorithm is $O(n(d+m\times s))$ with $s$ being the size of edge embedding. 

\section{Experiments}\label{sec:exp}

In this section, we first introduce the details of four evaluation datasets and the competitor algorithms. We focus on the link prediction task to evaluate the performances of our proposed methods compared to other state-of-the-art methods. Parameter sensitivity, convergence, and scalability are then discussed. Finally, we report the results of offline A/B tests of our method on \company's recommendation system.

\begin{table}
  \centering
  \caption{\label{tab:stats} Statistics of Datasets.}
  \begin{tabular}{c|c|c|c|c}
    \hline \hline
    \textbf{Dataset} & \# nodes & \# edges & \ \# n-types & \ \# e-types \\
    \hline
    Amazon & 10,166 & 148,865 & 1 & 2 \\
    YouTube & 2,000 & 1,310,617 & 1 & 5\\
    Twitter & 10,000 & 331,899 & 1 & 4\\
    \company-S & 6,163 & 17,865 & 2 & 4\\ 
    \company & 41,991,048 & 571,892,183 & 2 & 4\\
    \hline \hline
  \end{tabular}
\end{table}

\subsection{Datasets}
We work on three public datasets and the \company\ dataset for the link prediction task. Amazon Product Dataset\footnote{http://jmcauley.ucsd.edu/data/amazon/}~\cite{mcauley2015image, he2016ups} includes product metadata and links between products; YouTube dataset\footnote{http://socialcomputing.asu.edu/datasets/YouTube}~\cite{tang2009uncoverning, tang2009uncovering}  consists of various types of interactions;  Twitter dataset\footnote{https://snap.stanford.edu/data/higgs-twitter.html}~\cite{de2013anatomy}  also contains various types of links. \company\ dataset has two node types, user and item (or product), and includes four types of interactions between users and items. Since some of the baselines cannot scale to the whole graph, we evaluate performances on sampled datasets. The statistics of these four sampled datasets are summarized in Table~\ref{tab:stats}. 
Notice that \textit{n-types} and \textit{e-types} in the table denote node types and edge types, respectively. 

\vpara{Amazon.} 
In our experiments, we only use the product metadata of \textit{Electronics} category, including the product attributes and co-viewing, co-purchasing links between products. The product attributes include the price, sales-rank, brand, and category. \hide{The node type set of Amazon is $\mathcal{O}=\{product\}$ and the edge type set of Amazon is $\mathcal{R}=\{also\_bought, also\_viewed\}$, denoting that two products are co-bought or co-viewed by the same user respectively. The number of products in \textit{Electronics} is still large for many baselines; therefore, we extract a connected subgraph from the whole graph. }

\vpara{YouTube.} 
YouTube dataset is a multiplex bidirectional network dataset that consists of five types of interactions between 15,088 YouTube users. The types of edges include contact, shared friends, shared subscription, shared subscriber, and shared favorite videos between users. \hide{The size of node type set and edge type set of YouTube dataset is $|\mathcal{O}|=1$ and $|\mathcal{R}|=5$ respectively.}

\vpara{Twitter.} 
Twitter dataset is about tweets related to the discovery of the Higgs boson between 1st and 7th, July 2012. It is made up of four directional relationships between more than 450,000 Twitter users. The relationships are re-tweet, reply, mention, and friendship/follower relationships between Twitter users. \hide{The size of node type set and edge type set of Twitter dataset is $|\mathcal{O}|=1$ and $|\mathcal{R}|=4$ respectively.}

\vpara{\company.} \company\ dataset consists of four types of interactions including click, add-to-preference, add-to-cart, and conversion between two types of nodes, user and item. 
The sampled \company\ dataset is denoted as \textbf{\company-S}. We also provide the evaluation of the whole dataset on \company's distributed cloud platform; the full dataset is denoted as \textbf{\company}.
\subsection{Competitors}
We categorize our competitors into the following four groups. The overall embedding size is set to 200 for all methods. The specific hyper-parameter settings for different methods are listed in the Appendix. \hide{For random-walk based methods, the number of walks for each node is set to 20 and the length of walks is set to 10. The window size is set to 5 for generating node contexts. The number of negative samples is set to 5. The number of iterations for training skip-gram model is set to 10. The number of epochs is set to 5. The additional embedding size for MNE and the edge-specific embedding size for our model is 10. Model specific hyper-parameter settings for different methods are introduced below. }


\vpara{Network Embedding Methods.}
The compared methods include DeepWalk~\cite{perozzi2014deepwalk}, LINE~\cite{tang2015line}, and node2vec~\cite{grover2016node2vec}. As these methods can only deal with HON, we feed separate graphs with different edge types to them and obtain different node embeddings for each separate graph. 

\vpara{Heterogeneous Network Embedding Methods.}
We focus on the representative work  metapath2vec~\cite{dong2017metapath2vec}, which is designed to deal with the node heterogeneity. When there is only one node type in the network, metapath2vec degrades to DeepWalk. 
For \company\ dataset, the meta-path schemes are set to $U-I-U$ and $I-U-I$, where $U$ and $I$ denote \textit{User} and \textit{Item} respectively.

\vpara{Multiplex Heterogeneous Network Embedding Methods.}
The compared methods include PMNE~\cite{liu2017principled}, MVE~\cite{qu2017attention}, MNE~\cite{ijcai2018-428}. We denote the three methods of PMNE as PMNE(n), PMNE(r) and PMNE(c) respectively. MVE uses collaborated context embeddings and applies an attention mechanism to view-specific embedding. MNE uses one common embedding and several additional embeddings for each edge type, which are jointly learned by a unified network embedding model. \hide{We further examined the aggregated version of MNE, denoted as MNE-aggregate, which is basically MNE with the aggregator in section \ref{sec:trans} being added on top of it.}

\vpara{Attributed Network Embedding Methods.}
The compared method is ANRL~\cite{zhang2018anrl}. ANRL uses a neighbor enhancement auto-encoder to model the node attribute information and an attribute-aware skip-gram model based on the attribute encoder to capture the network structure. 


\vpara{Our Methods.}
Our proposed methods include \model-T and \model-I. \model-T considers the network structure and uses base embeddings and edge embeddings to capture the influential factors between different edge types. \model-I considers both the network structure and the node attributes, and learns an inductive transformation function instead of learning base embeddings and meta embeddings for each node directly. For \company\ dataset, we use the same meta-path schemes as metapath2vec. For some datasets without node attributes, we also generate node features for them.
\hide{To analysis the effect of the aggregation and attention mechanism used in our model, we further conduct experiments on subtracting attention/aggregation part of our model, Namely \model-NoAttn and \model-NoAggr. Start from the \model-T model, for \model-NoAttn, after generating $\mathbf{u}$, we do not use the attention mechanism described in equation \ref{GATNE} and use Equation (\ref{MNE}) to generate overall embedding; for \model-NoAggr, we do not use aggregation as Equation (\ref{aggregation}) but simply take $\mathbf{u}_{i,j}=\mathbf{g}_{i,j}$. }
Due to the size of the \company\  dataset with more than 40 million nodes and 500 million edges and the scalabilities of the other competitors, we only compare our \model\ model with DeepWalk, MVE, and MNE. Specific implementations can be found in the Appendix. 


\begin{table*}
	\centering
	\small
	\setlength{\abovecaptionskip}{10pt}
	\caption{\label{tab:tab_res}Performance comparison of different methods on four datasets. \hide{We use \textbf{bold} to highlight methods that surpass all other compared methods.}}
	\begin{tabular}{c|ccc|ccc|ccc|ccc}
		\hline \hline
		& \multicolumn{3}{c|}{Amazon} & \multicolumn{3}{c|}{YouTube} & \multicolumn{3}{c|}{Twitter} & \multicolumn{3}{c}{\company-S} \\
		& ROC-AUC & PR-AUC & F1 & ROC-AUC & PR-AUC & F1 & ROC-AUC & PR-AUC & F1 & ROC-AUC & PR-AUC & F1 \\
		\hline
		DeepWalk & 94.20 & 94.03 & 87.38 & 71.11 & 70.04 & 65.52 & 69.42 & 72.58 & 62.68 & 59.39 & 60.62 & 56.10 \\
		node2vec & 94.47 & 94.30 & 87.88 & 71.21 & 70.32 & 65.36 & 69.90 & 73.04 & 63.12 & 62.26 & 63.40 & 58.49 \\
		LINE & 81.45 & 74.97 & 76.35 & 64.24 & 63.25 & 62.35 & 62.29 & 60.88 & 58.18 & 53.97 & 54.65 & 52.85 \\
		\hline
		metapath2vec & 94.15 & 94.01 & 87.48 & 70.98 & 70.02 & 65.34 & 69.35 & 72.61 & 62.70 & 60.94 & 61.40 & 58.25\\
		\hline
        ANRL & 71.68 & 70.30 & 67.72 & 75.93 & 73.21 & 70.65 & 70.04 & 67.16 & 64.69 & 58.17 & 55.94 & 56.22 \\
		\hline
		PMNE(n) & 95.59 & 95.48 & 89.37 & 65.06 & 63.59 & 60.85 & 69.48 & 72.66 & 62.88 & 62.23 & 63.35 & 58.74 \\
		PMNE(r) & 88.38 & 88.56 & 79.67 & 70.61 & 69.82 & 65.39 & 62.91 & 67.85 & 56.13 & 55.29 & 57.49 & 53.65 \\
		PMNE(c) & 93.55 & 93.46 & 86.42 & 68.63 & 68.22 & 63.54 & 67.04 & 70.23 & 60.84 & 51.57 & 51.78 & 51.44 \\
		MVE & 92.98 & 93.05 & 87.80 & 70.39 & 70.10 & 65.10 & 72.62 & 73.47 & 67.04 & 60.24 & 60.51 & 57.08 \\
		MNE & 90.28 & 91.74 & 83.25 & 82.30 & 82.18 & 75.03 & 91.37 & 91.65 & 84.32 & 62.79 & 63.82 & 58.74 \\
		\hline
		\model-T & \textbf{97.44} & \textbf{97.05} & \textbf{92.87} & \textbf{84.61} & 81.93 & \textbf{76.83} & \textbf{92.30} & 91.77 & \textbf{84.96} & 66.71 & 67.55 & 62.48 \\
		\model-I & 96.25 & 94.77 & 91.36 & 84.47 & \textbf{82.32} & \textbf{76.83} & 92.04 & \textbf{91.95} & 84.38 & \textbf{70.87} & \textbf{71.65} & \textbf{65.54}\\
		\hline \hline
	\end{tabular}
\end{table*}

\subsection{Performance Analysis}
Link prediction is a common task in both academia and industry. For academia, it is widely used to evaluate the quality of network embeddings obtained by different methods. In the industry, link prediction is a very demanding task since in real-world scenarios we are usually facing graphs with partial links, especially for e-commerce companies that rely on the links between their users and items for recommendations. 
We hide a set of edges/non-edges from the original graph and train on the remaining graph. Following~\cite{kipf2016variational, bojchevski2018deep}, we create a validation/test set that contains 5\%/10\% randomly selected positive edges respectively with the equivalent number of randomly selected negative edges for each edge type. The validation set is used for hyper-parameter tuning and early stopping. The test set is used to evaluate the performance and is only run once under the tuned hyper-parameter. We use some commonly used evaluation criteria, i.e., the area under the ROC curve (ROC-AUC)~\cite{hanley1982meaning} and the PR curve (PR-AUC)~\cite{davis2006relationship} in our experiments. 
We also use F1 score as the other metric for evaluation. To avoid the thresholding effect~\cite{tang2009large}, we assume that the number of hidden edges in the test set is given~\cite{tang2009large, perozzi2014deepwalk, qiu2018network}. All of these metrics are uniformly averaged among the selected edge types. 

\begin{table}[t]
    \centering
    \caption{\label{tab:tab_online_res}The experimental results on \company\ dataset. \hide{We use \textbf{bold} to highlight methods that surpass all other compared methods.}} 
    \begin{tabular}{c|ccc}
        \hline \hline
        & ROC-AUC & PR-AUC & F1 \\
        \hline
        DeepWalk & 65.58 & 78.13 & 70.14\\
        MVE & 66.32 & 80.12 & 72.14\\
        MNE & 79.60 & 93.01 & 84.86\\ 
        \model-T & 81.02 & 93.39 & 86.65 \\
        \model-I & \textbf{84.20} & \textbf{95.04} & \textbf{89.94} \\
        \hline \hline
    \end{tabular}
\end{table}

\begin{figure}
    \setlength{\belowcaptionskip}{-0.1cm}
	\centering
	\subfigure[Convergence]{\label{subfig:convergence}\includegraphics[width=0.23\textwidth]{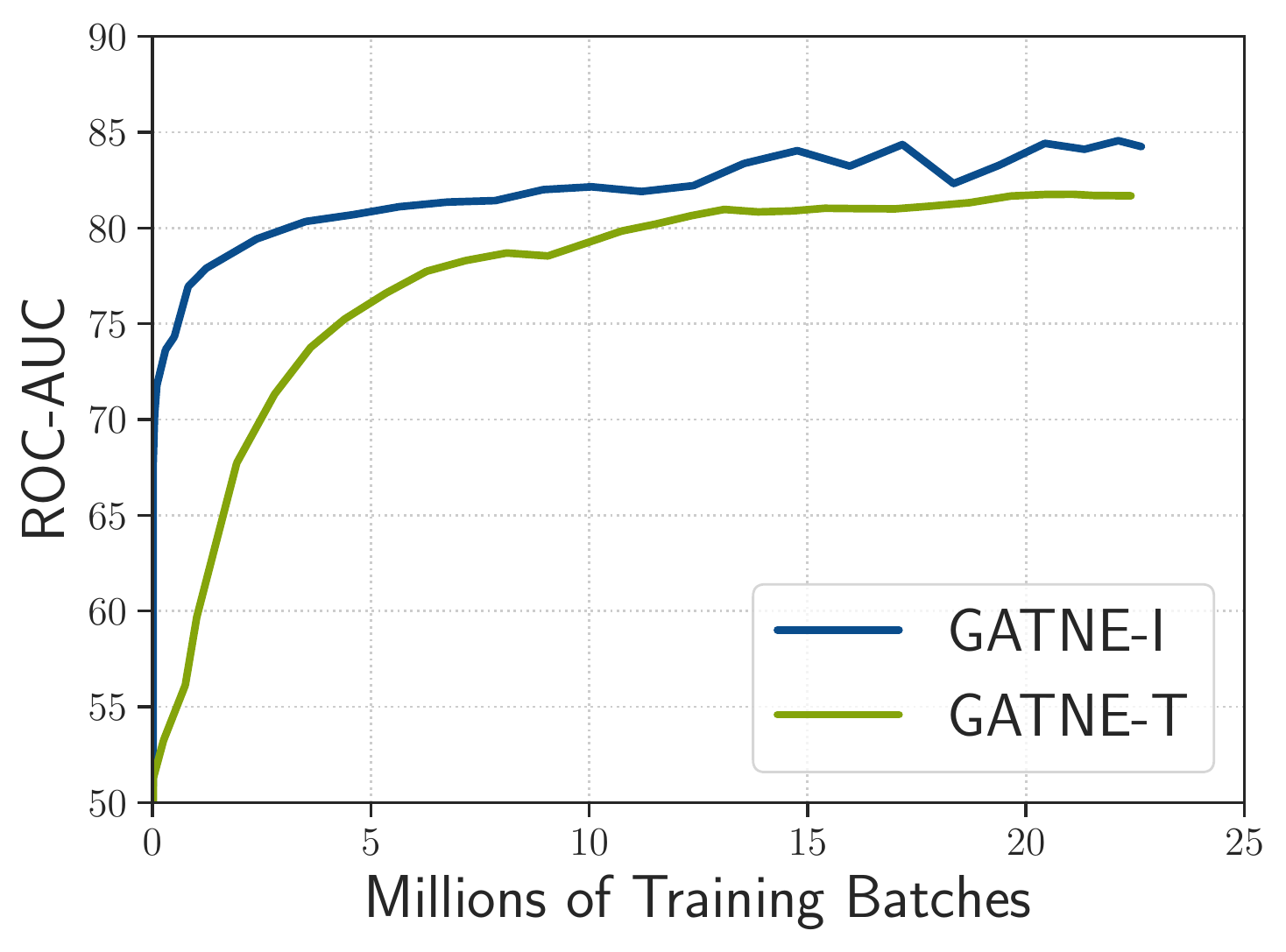}}
 	\subfigure[Scalability]{\label{subfig:scalability}\includegraphics[width=0.23\textwidth]{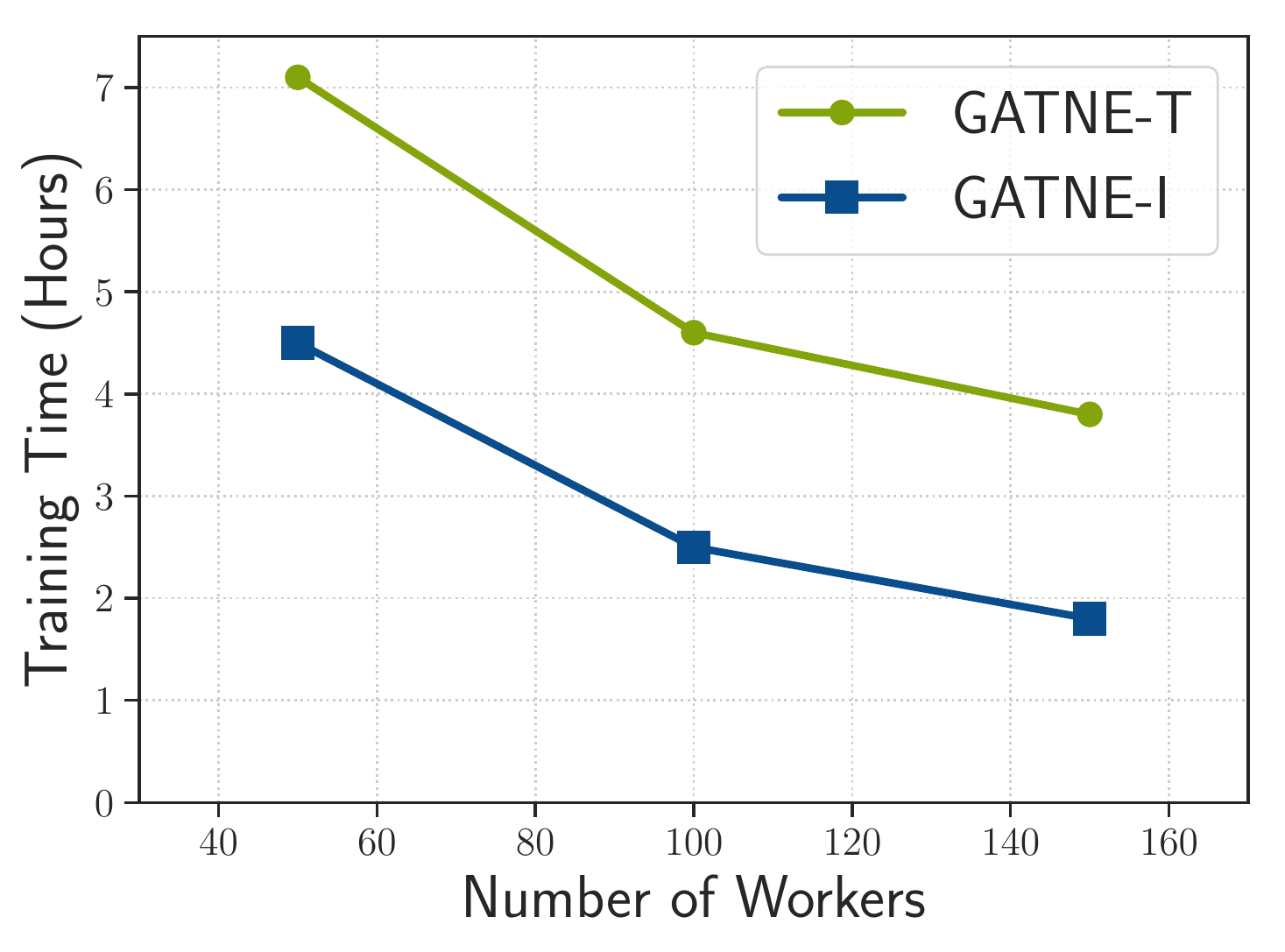}}
	\caption{(a) The convergence curve for  \model-T and \model-I models on \company\ dataset. The inductive model converges faster and achieves better performance than the transductive model. \hide{The batch size is set to 512.} (b) The training time decreases as the number of workers increases. \model-I takes less training time to converge compared with \model-T.}
\end{figure}

\begin{figure}[t]
	\centering
	\subfigure[Base embedding dimension]{\label{fig:basedim}\includegraphics[width=0.23\textwidth]{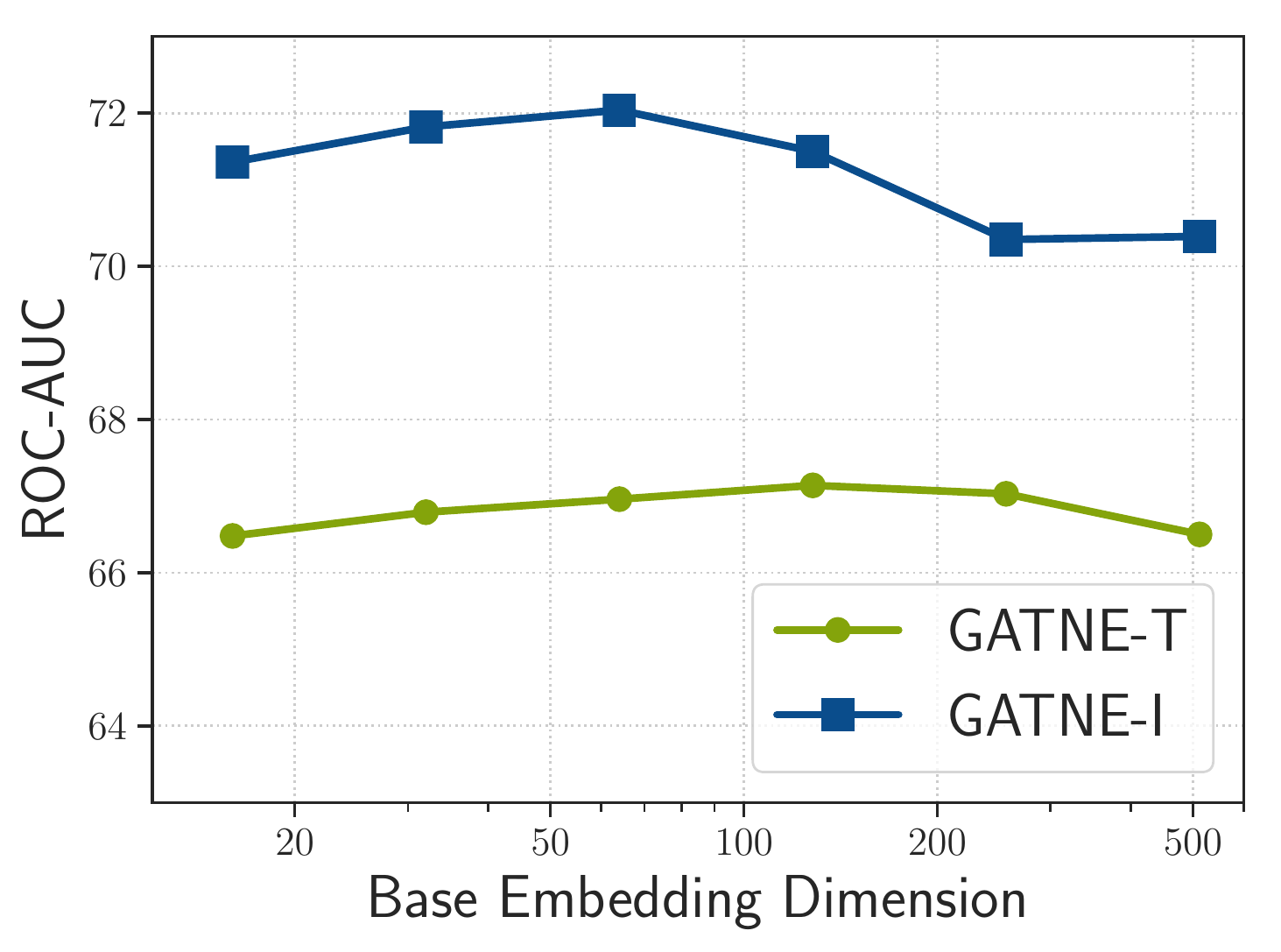}}
	\subfigure[Edge embedding dimension]{\label{fig:edgedim}\includegraphics[width=0.23\textwidth]{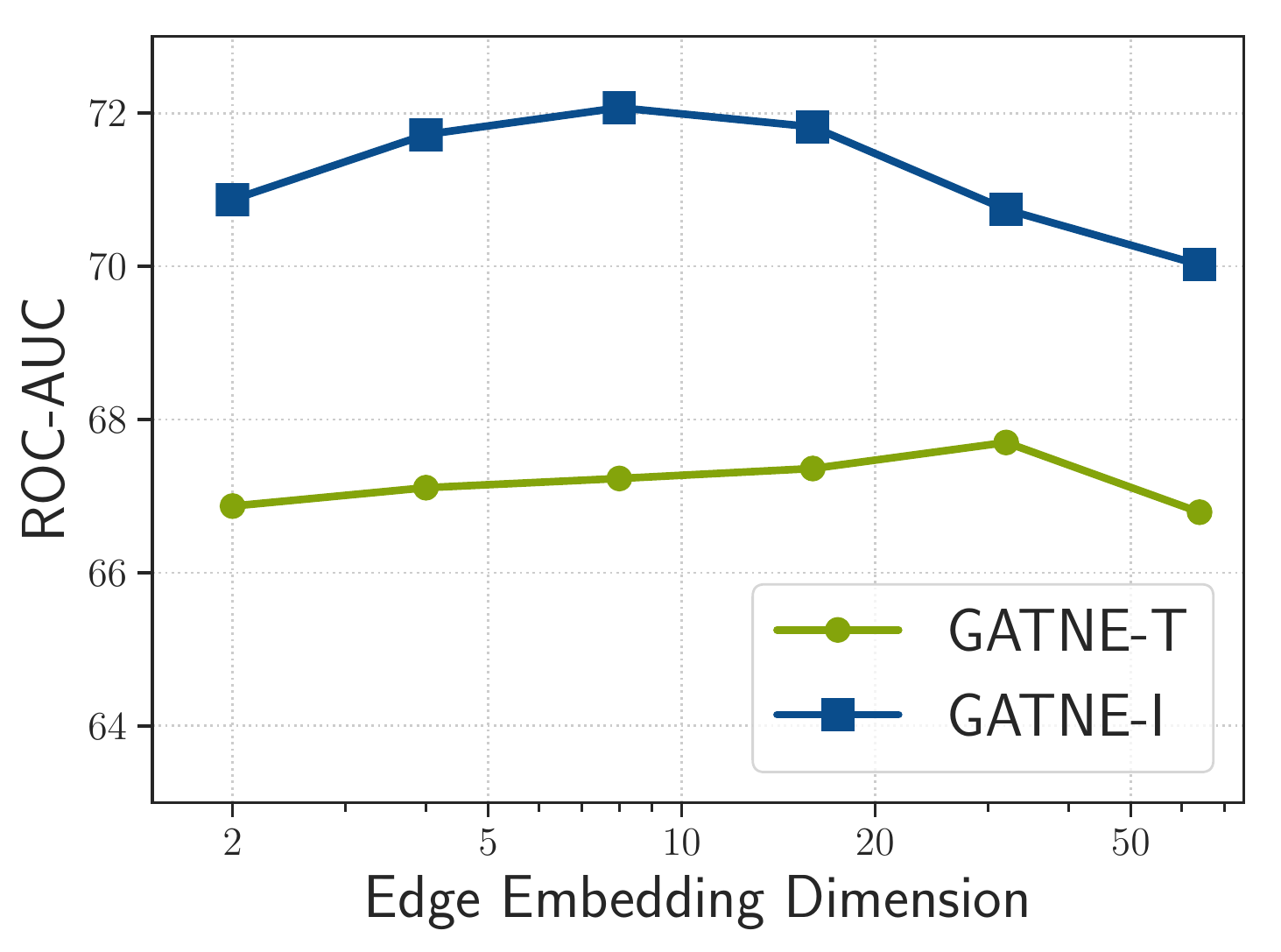}}
	\caption{The performance of \model-T and \model-I on \company-S when changing base/edge embedding dimensions exponentially.}
	\label{fig:parameter}
\end{figure}

\vpara{Quantitative Results.}
The experimental results of three public datasets and \company-S are shown in Table~\ref{tab:tab_res}. \model\ outperforms all sorts of baselines in the various datasets. 
\model-T obtains better performance than \model-I on Amazon dataset as the node attributes are limited. The node attributes of Alibaba dataset are abundant so that \model-I obtains the best performance. ANRL is very sensitive to the weak node attributes and obtains the worst result on Amazon dataset. The different node attributes of users and items also limit the performance of ANRL on Alibaba-S dataset. On YouTube and Twitter datasets, \model-I performs similarly to \model-T as the node attributes of these two datasets are the node embeddings of DeepWalk, which are generated by the network structure.
Table~\ref{tab:tab_online_res} lists the experimental results of \company\ dataset.  \model\ scales very well and achieves state-of-the-art performance on \company\  dataset, with $2.18\%$ performance lift in PR-AUC, $5.78\%$ in ROC-AUC, and $5.99\%$ in F1-score, compared with best results from previous state-of-the-art algorithms. The \model-I performs better than \model-T model in the large-scale dataset, suggesting that the inductive approach works better on large-scale attributed multiplex heterogeneous networks, which is usually the case in real-world situations.

\vpara{Convergence Analysis.}
We analyze the convergence properties of our proposed models on \company\ dataset. \hide{For detailed implementations see Appendix \ref{app:converge}.} The results, as shown in Figure~\ref{subfig:convergence}, demonstrate that \model-I converges faster and achieves better performance than \model-T on extremely large-scale real-world datasets. 

\vpara{Scalability Analysis.}
We investigate the scalability of \model\ that has been deployed on multiple workers for optimization. Figure~\ref{subfig:scalability} shows the speedup w.r.t. the number of workers on the \company\ dataset. The figure shows that \model\ is quite scalable on the distributed platform, as the training time decreases significantly when we add up the number of workers, and finally, the inductive model takes less than 2 hours to converge with 150 distributed workers. We also find that \model-I's training speed increases almost linearly as the number of workers increases but less than 150. While  \model-T converges slower and its training speed is about to reach a limit when the number of workers being larger than 100. Besides the state-of-the-art performance, \model\ is also scalable enough to be adopted in practice. 

\vpara{Parameter Sensitivity.}
We investigate the sensitivity of different hyper-parameters in \model, including overall embedding dimension $d$ and edge embedding dimension $s$.\hide{, and also the sensitivity to sparse input. In real world scenarios, usually is takes a hard effort to get full link information of graphs, but not that hard to collect some of the link information. So we also evaluate \model's performance when the link of the training graph not being provided. } Figure~\ref{fig:parameter} illustrates the performance of \model\ when altering the base embedding dimension $d$ or edge embedding dimension $s$ from the default setting ($d=200,s=10$). We can conclude that the performance of \model\ is relatively stable within a large range of base/edge embedding dimensions, and the performance drops when the base/edge embedding dimension is either too small or too large. \hide{For the implementation details see the Appendix \ref{app:para}.}





\subsection{Offline A/B Tests}
\label{sec:online}

We deploy our inductive model \model-I on \company's distributed cloud platform for its recommendation system. The training dataset has about 100 million users and 10 million items, with 10 billion interactions between them per day. We use the model to generate embedding vectors for users and items. For every user, we use K nearest neighbor (KNN) with Euclidean distance to calculate the top-N items that the user is most likely to click. The experimental goal is to maximize top-N hit-rate. 
Under the framework of A/B tests, we conduct an offline test on \model-I, MNE, and DeepWalk. The results demonstrate that \model-I improves hit-rate by \textbf{3.26\%} and \textbf{24.26\%} compared to MNE and DeepWalk, respectively.

\section{Conclusion} \label{sec:conclusion}

In our paper, we formalized the attributed multiplex heterogeneous network embedding problem and proposed \model\ to solve it with both transductive and inductive settings. We split the overall node embedding of \model-I into three parts: base embedding, edge embedding, and attribute embedding. The base embedding and attribute embedding are shared among edges of different types, while the edge embedding is computed by aggregation of neighborhood information with the self-attention mechanism. Our proposed methods achieve significantly better performances compared to previous state-of-the-art methods on link prediction tasks across multiple challenging datasets. The approach has been successfully deployed and evaluated on \company's recommendation system with excellent scalability and effectiveness.

\begin{acks}
We thank Qibin Chen, Ming Ding, Chang Zhou, and Xiaonan Fang for their comments. The work is supported by the
NSFC for Distinguished Young Scholar 
(61825602),
NSFC (61836013), 
and a research fund supported by Alibaba Group.
\end{acks}

\bibliographystyle{ACM-Reference-Format}
\bibliography{reference}

\appendix
\section{Appendix}
\label{sec:reproduce}

In the appendix, we first give the implementation notes of our proposed models. The detailed descriptions of datasets and the parameter configurations of all methods are then given. Finally, we discuss the questions about fair comparison and our future work. 

\subsection{Implementation Notes}

\vpara{Running Environment.}
The experiments in this paper can be divided into two parts. One is conducted on four datasets using a single Linux server with 4 Intel(R) Xeon(R) Platinum 8163 CPU @ 2.50GHz, 512G RAM and 8 NVIDIA Tesla V100-SXM2-16GB. The codes of our proposed models in this part are implemented with TensorFlow\footnote{\label{fn:tf}https://www.tensorflow.org/} 1.12 in Python 3.6. The other part is conducted on the full \company\ dataset using \company's distributed cloud platform which contains thousands of workers. Every two workers share an NVIDIA Tesla P100 GPU with 16GB memory. Our proposed models are implemented with TensorFlow 1.4 in Python 2.7 in this part.

\vpara{Implementation Details.}
Our codes used by single Linux server can be split into three parts: random walk, model training and evaluation. The random walk part is implemented referring to the corresponding part of DeepWalk\footnote{\label{fn:deepwalk}https://github.com/phanein/deepwalk} and metapath2vec\footnote{\label{fn:mp2vec}https://ericdongyx.github.io/metapath2vec/m2v.html}. The training part of the model is implemented referring to the word2vec part of TensorFlow tutorials\footnote{https://www.tensorflow.org/tutorials/representation/word2vec}. The evaluation part uses some metric functions from scikit-learn\footnote{https://scikit-learn.org/stable/modules/classes.html\#module-sklearn.metrics} including \textit{roc\_auc\_score}, \textit{f1\_score}, \textit{precision\_recall\_curve}, \textit{auc}. Our model parameters are updated and optimized by stochastic gradient descent with Adam updating rule~\cite{kingma2014adam}. The distributed version of our proposed models is implemented based on the coding rules of \company's distributed cloud platform in order to maximize the distribution efficiency. High-level APIs, such as \textit{tf.estimator} and \textit{tf.data}, are used for the higher coefficient of utilization of computation resources in the \company's distributed cloud platform.  

\vpara{Function Selection.}
Many different aggregator functions in Equation (\ref{aggregation}), such as the mean aggregator (Cf. Equation (\ref{mean})) or pooling aggregator (Cf. Equation (\ref{pool})), achieve similar performance in our experiments. Mean aggregator is finally used to be reported in the quantitative experiments in our model. We use the linear transformation function as the parameterized function of attributes $\mathbf{h}_z$ and $\mathbf{g}_{z,r}$ in Equation (\ref{eqn:vni}) of our inductive model \model-I.

\vpara{Parameter Configuration.}
Our base/overall embedding dimension $d$ is set to 200 and the dimension of edge embedding $s$ is set to 10. The number of walks for each node is set to 20 and the length of walks is set to 10. The window size is set to 5 for generating node contexts. The number of negative samples $L$ for each positive training sample is set to 5. The number of maximum epochs is set to 50 and our models will early stop if ROC-AUC on the validation set does not improve in 1 training epoch. The coefficient $\alpha_r$ and $\beta_r$ are all set to 1 for every edge type $r$. For \company\ dataset, the node types include $U$ and $I$ representing \textit{User} and \textit{Item} respectively. The meta-path-schemes of our methods are set to $U-I-U$ and $I-U-I$. We use the default setting of Adam optimizer in TensorFlow; the learning rate is set to $lr=0.001$. For offline A/B test in section \ref{sec:online}, we use $N=50$.

\vpara{Code and Dataset Releasing Details.}
The codes of our proposed models on the single Linux server (based on Tensorflow 1.12), together with our partition of the three public datasets and the \company-S dataset are available. 

\subsection{Compared Methods}
We give the detailed running configuration about all compared methods as follows.
\hide{We compare our model with the following methods of different types. For the experiment conducted on a single Linux server, these methods' implementation details are described as follows.} The embedding size is set to 200 for all methods. For random-walk based methods, the number of walks for each node is set to 20 and the length of walks is set to 10. The window size is set to 5 for generating node contexts. The number of negative samples for each training pairs is set to 5. The number of iterations for training the skip-gram model is set to 100. The code sources and other specific hyper-parameter settings of compared methods are explained below. 

\subsubsection{Network Embedding Methods}
\begin{itemize}
  \item \textbf{DeepWalk}~\cite{perozzi2014deepwalk}. For public and \company-S datasets, we use the codes from the corresponding author's GitHub\footref{fn:deepwalk}. For \company\ dataset, we re-implemented DeepWalk on \company\ distributed cloud platform. 
  \item \textbf{LINE}~\cite{tang2015line}. The codes of LINE are from the corresponding author's GitHub\footnote{https://github.com/tangjianpku/LINE}. We use the LINE(1st+2nd) as the overall embeddings. The embedding size is set to 100 both for first-order and second-order embeddings. The number of samples is set to 1000 million.
  \item \textbf{node2vec}~\cite{grover2016node2vec}. The codes of node2vec are from the corresponding author's GitHub\footnote{https://github.com/aditya-grover/node2vec}. Node2vec adds two parameters to control the random walk process. The parameter $p$ is set to 2 and the parameter $q$ is set to 0.5 in our experiments.
\end{itemize}

\subsubsection{Heterogeneous Network Embedding Methods}
\begin{itemize}
    \item \textbf{metapath2vec}~\cite{dong2017metapath2vec}. 
    The codes provided by the corresponding author are only for specific datasets and could not directly generalize to other datasets. We re-implement metapath2vec for networks with arbitrary node types in Python based on the original C++ codes\footnote{https://ericdongyx.github.io/metapath2vec/m2v.html}. As the number of node types of three public datasets is one, metapath2vec degrades to DeepWalk in these three datasets. For \company\ dataset, the node types include $U$ and $I$ representing \textit{User} and \textit{Item} respectively. The meta-path-schemes are set to $U-I-U$ and $I-U-I$. \hide{As the original code is tied to some certain categories and are difficult to run, we re-implemented it ourselves. }
\end{itemize}

\subsubsection{Multiplex Heterogeneous Network Embedding Methods}
\begin{itemize}
  \item \textbf{PMNE}~\cite{liu2017principled}. PMNE proposes three different methods to apply node2vec on multiplex networks. We denote their network aggregation algorithm, result aggregation algorithm, and layer co-analysis algorithm as PMNE(n), PMNE(r), and PMNE(c) respectively in accord with the denotations of MNE\cite{ijcai2018-428}. We use the codes from MNE's GitHub\footnote{\label{fn:mne}https://github.com/HKUST-KnowComp/MNE}. The probability of traversing layers of PMNE(c) is set to 0.5. 
  \item \textbf{MVE}~\cite{qu2017attention}. MVE uses collaborated context embeddings and applies an attention mechanism to view-specific embeddings. The code of MVE was received from the corresponding author by email. The embedding dimensions for each view is set to 200. The number of training samples for each epoch is set to 100 million and the number of epochs is set to 10. As for \company\ dataset, we re-implemented this method on the \company\ distributed cloud platform. 
  \item \textbf{MNE}~\cite{ijcai2018-428}. MNE uses one common embedding and several additional embeddings of each edge type for each node, which are jointly learned by a unified network embedding model. The additional embedding size for MNE is set to 10. We use the codes released by the corresponding author in the GitHub\footref{fn:mne}. As for \company\ dataset, we re-implemented it on the \company\ distributed cloud platform. 

\end{itemize}

\subsubsection{Attributed Network Embedding Methods}
\begin{itemize}
  \item \textbf{ANRL}~\cite{zhang2018anrl}. We use the codes from Alibaba's GitHub\footnote{https://github.com/cszhangzhen/ANRL}. As YouTube and Twitter datasets do not have node attributes, we generate node attributes for them. To be specific, we use the node embeddings (200 dimensions) of DeepWalk as the input node features on these datasets for ANRL. For \company-S and Amazon dataset, we use raw features as attributes. 
\end{itemize}

\begin{table}
  \centering
  \caption{\label{tab:stats_origin} Statistics of Original Datasets.}
  \begin{tabular}{c|c|c|c|c}
    \hline \hline
    \textbf{Dataset} & \# nodes & \# edges & \# n-types & \# e-types \\
    \hline
    Amazon & 312,320 & 7,500,100 & 1 & 4 \\
    YouTube & 15,088 & 13,628,895 & 1 & 5 \\
    Twitter & 456,626 & 15,367,315 & 1 & 4 \\
    \hline \hline
  \end{tabular}
\end{table}

\subsection{Datasets}
Our experiment evaluates on five datasets, including four datasets and \company\ dataset. Due to the limitation of memory and computation resources on a single Linux server, the four datasets are subgraphs sampled from the original datasets for training and evaluation. Table \ref{tab:stats_origin} shows the statistics of the original public datasets. 

\begin{itemize}
    \item \textbf{Amazon} is a dataset of product reviews and metadata from Amazon. In our experiments, we only use the product metadata, including the product attributes and co-viewing, co-purchasing links between products. The node type set of Amazon is $\mathcal{O}=\{product\}$ and the edge type set of Amazon is $\mathcal{R}=\{also\_bought, also\_viewed\}$, which denotes two products are co-bought or co-viewed by the same user respectively. The products of Amazon are split into many categories. The number of products in all the categories is so large that we use the \textit{Electronics} category of products for experimentation. The number of products in \textit{Electronics} is still large for many algorithms; therefore, we extract a connected subgraph from the whole graph. 
    \item \textbf{YouTube} is a multi-dimensional bidirectional network dataset consists of 5 types of interactions between 15,088 YouTube users. The types of edges include contact, shared friends, shared subscription, shared subscriber, and shared favorite videos between users. It is a multiplex network with $|\mathcal{O}|=1$ and $|\mathcal{R}|=5$.
    \item \textbf{Twitter} is a dataset about tweets posted on Twitter about the discovery of the Higgs boson between 1st and 7th, July 2012. It is made up of $4$ directional relationships between more than 450,000 Twitter users. The relationships are re-tweet, reply, mention, and friendship/follower relationship between Twitter users. It is a multiplex network with $|\mathcal{O}|=1$ and $|\mathcal{R}|=4$.
    \item \textbf{\company} consists of $4$ types of interactions which including click, add-to-preference, add-to-cart, and conversion between two types of nodes, user and item. The node type set of \company\ is $\mathcal{O}=\{user, item\}$ and the size of the edge type set is $|\mathcal{R}|=4$. The whole graph of \company\ is so large that we cannot evaluate the performances of different methods on it by a single machine. We extract a subgraph from the whole graph for comparison with different methods, denoted as \textbf{\company-S}. By the way, we also provide the evaluation of the whole graph on the \company's distributed cloud platform, the full graph is denoted as \textbf{\company}.
\end{itemize}

\subsection{Discussion}

As for research on network embedding, many people use link prediction or node classification tasks for evaluating the representation of network embeddings. However, although there are many commonly used public datasets, like Twitter or YouTube dataset, none of them provide a "standard" separation for train, validation, and test for different tasks. This causes different results on the same dataset for different evaluation separations so the results from previous papers cannot be directly used, and researchers have to re-implement and run all baselines themselves, reducing their attention on improving their model.

Here we appeal on researchers to provide the standardized dataset, which contains a standard separation of train, validation and test sets as well as the full dataset. Therefore researchers can evaluate their method based on a standard environment, and results across papers can be compared directly. This also helps to increase the reproducibility of research. 

\vpara{Future Work.}
Apart from the heterogeneity of networks, the dynamics of networks are also crucial to network representation learning. There are three ways to capture the dynamic information of networks. Firstly, we can add dynamic information into node attributes. For example, we can use methods like LSTM~\cite{hochreiter1997long} to capture the dynamic activities of users. Secondly, the dynamic information, such as the timestamp of each interaction, can be considered as the attributes of edges. Thirdly, we may consider the several snapshots of networks representing the dynamic evolution of networks. We leave representation learning for the dynamic attributed multiplex heterogeneous network as our future work.



\end{document}